\newcommand{\HH}{\mathcal{H}}
\newcommand{\CC}{\mathbb{C}}
\newcommand{\RR}{\mathbb{R}}
\newcommand{\Inn}[1]{\langle #1 \rangle}
\newcommand{\Norm}[1]{\left|\left\langle #1 \right\rangle \right|}
\newcommand{\paulix}{\bigl(\begin{smallmatrix} & 1 \\ 1 & \end{smallmatrix}\bigr)}
\newcommand{\pauliy}{\bigl(\begin{smallmatrix} & -i\\ i & \end{smallmatrix}\bigr)}
\newcommand{\pauliz}{\bigl(\begin{smallmatrix}1 & \\ & -1 
\end{smallmatrix}\bigr)}
\newcommand{\pzup}{\bigl(\begin{smallmatrix}1 \\ 0\end{smallmatrix}\bigr)}
\newcommand{\pzdown}{\bigl(\begin{smallmatrix} 0 \\ 1 \end{smallmatrix}\bigr)} 
\newtheorem{prop}{Proposition}
\newtheorem*{thm}{Theorem}
\theoremstyle{definition} 
\newtheorem{myth}{Myth} 
\let\uppercasenonmath\@gobble
\title{Three Myths About Time Reversal in Quantum Theory}
\author[Bryan W. Roberts]{Bryan W. Roberts\\ \\Philosophy, Logic and Scientific Method\\Centre for Philosophy of Natural and Social Sciences\\London School of Economics and Political Science\\\href{mailto:b.w.roberts@lse.ac.uk}{b.w.roberts@lse.ac.uk} \\ \\ \today}
\date{\today. Accepted version, forthcoming in \emph{Philosophy of Science}.}
\thanks{\emph{Acknowledgements.} Thanks to Harvey Brown, Craig Callender, Tony Duncan, John Earman, Christoph Lehner, John D. Norton, Giovanni Valente, and David Wallace for helpful comments. Special thanks to David Malament for many helpful discussions during the development of these ideas. This work benefited from a National Science Foundation grant \# 1058902.} 
\begin{document}
\maketitle
\begin{abstract}
Many have suggested that the transformation standardly referred to as `time reversal' in quantum theory is not deserving of the name. I argue on the contrary that the standard definition is perfectly appropriate, and is indeed forced by basic considerations about the nature of time in the quantum formalism.
\end{abstract} 

\section{Introduction}

\subsection{Time reversal} Suppose we film a physical system in motion, and then play the film back in reverse. Will the resulting film display a motion that is possible, or impossible? This is a rough way of posing the question of time reversal invariance. If the reversed motion is always possible, then the system is time reversal invariant. Otherwise, it is not.

Unfortunately, the practice of reversing films is not a very rigorous way to understand the symmetries of time. Worse, it's not always clear how to interpret what's happening in a reversed film. The velocity of a massive body appears to move in the reverse direction, sure enough, but what happens to a wavefunction? What happens to an electron's spin? Such questions demand a more robust way to understand the meaning of time reversal in quantum theory. It's an important matter to settle, as the standard mathematical definition of time reversal plays a deep role in modern particle physics. One would like to have an account of the philosophical and mathematical underpinnings of this central concept. This paper gives one such account, which proceeds in three stages. We first show why time reversal is unitary or antiunitary, then that it is antiunitary, and finally uniquely derive the transformation rules.

\subsection{Controveries} The problem does not yet have an agreed-upon textbook answer. However, a prevalent response is that the transformation commonly referred to as `time reversal' in quantum theory isn't really deserving of the name. Eugene Wigner, in the first textbook presentation of time reversal in quantum mechanics, remarked that ```reversal of the direction of motion' is perhaps a more felicitous, though longer, expression than {`time inversion'}'' \cite[p.325]{wigner1931}. Later textbooks followed suit, with \citet[p.266]{sakurai1994} writing: ``This is a difficult topic for the novice, partly because the term \emph{time reversal} is a misnomer; it reminds us of science fiction. Actually what we do in this section can be more appropriately characterized by the term \emph{reversal of motion}.'' And in  \citet[p.377]{ballentine1998} we find, ``the term `time reversal' is misleading, and the operation... would be more accurately described as \emph{motion reversal}.''

Some philosophers of physics adopted this perspective and ran with it. \citet{callender2000time} suggests that we refer to the standard definition as `Wigner reversal', leaving the phrase `time reversal' to refer to the mere reversal of a time ordering of events $t\mapsto -t$. This leads him to the radical conclusion\footnote{One precise way to derive this conclusion is as follows. Let $\psi(t)$ be any solution to the Schr\"odinger equation $i\tfrac{d}{dt}\psi(t) = H\psi(t)$, where $H$ is a fixed self-adjoint and densely-defined operator. Suppose that if $\psi(t)$ is a solution, then so is $\psi(-t)$, in that $i\tfrac{d}{dt}\psi(-t) = H\psi(-t)$. We have by substitution $t\mapsto-t$ that $-i\tfrac{d}{dt}\psi(-t) = H\psi(-t)$, and so adding these two equations we get $0 = 2H\psi(-t)$ for all $\psi(t)$. That is only possible if $H$ is the zero operator. So, if a quantum system is non-trivial (i.e. $H\neq0$), then it is not invariant under $\psi(t)\mapsto\psi(-t)$. In contrast, most familiar quantum systems \emph{are} invariant under the standard time reversal transformation, $\psi(t)\mapsto T\psi(-t)$.} that, not just in weak interactions like neutral kaon decay, but in ordinary non-relativistic Schr\"odinger interactions, the ``evolution is not TRI [time reversal invariant], contrary to received wisdom, so time in a (nonrelativistic) quantum world is handed'' \citep[p.268]{callender2000time}.


Albert adopts a similar perspective, writing, ``the books identify precisely that transformation as the transformation of `time-reversal.' ... The thing is that this identification is \emph{wrong}. ... [Time reversal] can involve nothing whatsoever other than reversing the \emph{velocities of the particles}'' \citep[pgs. 20-21]{albert2000}. This implies that time reversal cannot conjugate the wavefunction, as is standardly assumed, which leads Albert to declare, ``the dynamical laws that govern the evolutions of quantum states in time cannot possibly be invariant under \emph{time-reversal}'' (p.132). A detailed critical discussion of Albert's general perspective has been given by \citet{earman2002}.

Both Callender and Albert argue that there is something unnatural about supposing time reversal does more than reverse the order of states in a trajectory. The standard expression of time reversal maps a trajectory $\psi(t)$ to $T\psi(-t)$, reversing the order of a trajectory $t\mapsto-t$, but also transforming instantaneous states by the operator $T$. Both Callender and Albert propose that time reversal is more appropriately described by mere `order reversal' $\psi(t)\mapsto\psi(-t)$. Callender explains the view as follows.
\begin{quote}
  David Albert... argues --- rightly in my opinion --- that the traditional definition of [time reversal invariance], which I have just given, is in fact gibberish. It does not make sense to \emph{time-reverse} a truly \emph{instantaneous} state of a system. \citep[p.254]{callender2000time}
\end{quote}
Some quantities, such as a velocity $dx/dt$, may still be reversed. However, the view is that these are not truly instantaneous quantities, but depend in an essential way on the directed development of some quantity in time. A quantity that is truly only defined by an instant cannot be sensibly reversed by the time reversal transformation. One might refer to the underlying concern as the `pancake objection': if the evolution of the world were like a growing stack of pancakes, why should time reversal involve anything other than reversing the order of pancakes in the stack?

Here is one reason: properties at an instant often depend essentially on temporal direction, even though this may not be as apparent as in the case of velocity. Consider the case of a soldier running towards a vicious monster. In a given instant, someone might call such a soldier `brave' (or at least `stupid'). The time-reversed soldier, running away from the vicious monster, would more accurately be described as `cowardly' at an instant. The situation in fundamental physics is analogous: properties like momentum, magnetic force, angular momentum, and spin all depend in an essential way on temporal direction for their definition. The problem with the pancake objection is that it ignores such properties: time reversal requires taking each individual pancake and `turning it around', as it were, in addition to reversing the order.

A supporter of Callender and Albert could of course deny that there are good reasons to think that bravery, momentum or spin are intrinsically tied to the direction of time. Callender refers to many such suggestions as ``misguided attempts,'' arguing on the contrary that from a definition of momentum such as $P=i\hbar\tfrac{d}{dx}$ in the Schr\"odinger representation, the lack of appearance of a `little $t$' indicates that it ``does not logically follow, as it does in classical mechanics, that the momentum... must change sign when $t\mapsto-t$. Nor does it logically follow from $t\mapsto-t$ that one must change $\psi\mapsto\psi^*$'' \citep[p.263]{callender2000time}.

I am not convinced. There is a natural perspective on the nature of time according to which quantities like momentum and spin really do change sign when time-reversed, or so I will argue. This may not be obvious from their expression in a given formalism. But, as \citet{malament2004} has shown, some quantities (like the magnetic field) may depend on temporal direction even when there is not an obvious `little $t$' in the standard formalism\footnote{Malament illustrates a natural sense in which the magnetic field $B^a$ is defined by the Maxwell-Faraday tensor $F_{ab}$, which is in turn defined with respect to a temporal orientation $\tau^a$. So, since time reversal maps $\tau^a\mapsto-\tau^a$, it follows that it $F_{ab}\mapsto-F_{ab}$ and $B^a\mapsto-B^a$ \citep[\S 4-6]{malament2004}.}. I claim that the situation is similar in quantum theory, and that consequently, it is no less natural to reverse momentum or spin under time reversal than it is to reverse velocity. 

Let me set aside arguments from monsters and other gratuitous metaphors for the remainder of this paper. I only give them to provide some physical intuition for those who find it helpful. My aim here is more general. In what follows I will set out and motivate a few precise elementary considerations about the nature of time, and then show how they lead inevitably to the standard definition of time reversal in quantum mechanics, complete with the standard transformation rules on instantaneous states. Along the way I will seek to dissolve three myths about time reversal in quantum theory, which may be responsible for some of the controveries above.  

\subsection{Three Myths} The skeptical perspective, that the standard definition of time reversal is not deserving of the name, arises naturally out of three myths about time reversal in quantum theory. In particular, these myths suggest that the justification for the standard time reversal operator amounts to little more than a convention. If that were true, then one could freely propose an alternative definition as Callender and Albert have done, without loss. I will argue that there is something lost. The standard definition of time reversal cannot be denied while maintaining a plausible perspective on the nature of time. It is more than a convention, in the sense that the following three myths can be dissolved.

\begin{myth}
\emph{The preservation of transition probabilities ($\Norm{T\psi,T\phi}=\Norm{\psi,\phi}$) is a conventional feature of time reversal, with no further justification.} Many presentations presume this is just a conventional property of `symmetry operators.' A common myth is that there is no good answer to the question of \emph{why} such operators preserve transition probabilities. I will point out one good reason.
\end{myth}

\begin{myth}
  \emph{The antiunitary (or `conjugating aspect') of time reversal is a convention, unjustified, or else presumes certain transformation rules for `position' and `momentum.'} When it is not posited by convention, one can show that antiunitarity follows from the presmption that time reversal preserves position ($Q \mapsto Q$) and reverses momentum ($P \mapsto -P$), as we shall see. This argument has unfortunate limitations. I will propose an improved derivation.
\end{myth}

\begin{myth}
\emph{The way that time reversal transforms observables is a convention, unjustified, or requires comparison to classical mechanics.} When asked to justify the transformations $Q \mapsto Q$ and $P \mapsto -P$, or the claim that $T^2=-1$ for odd-fermion systems, authors often appeal to the myth that this is either a convention, or needed in order to match the classical analogues in Hamiltonian mechanics. I will argue neither is the case, and suggest a new way to view their derivation.
\end{myth}

Callender and Albert have fostered the second myths in demanding that time reversal invert the order of instantaneous states without any kind of conjugation; they have fostered the third in arguing that it doesn't necessarily transform momentum and spin\footnote{Callender argues that momentum reverses sign in quantum theory only because of a classical correspondence rule. I discuss this argument in detail in Section \ref{subs:pos-mom} below.}. However, these perspectives aside, I hope that the dissolution of these myths and the account of time reversal that I propose may be of independent interest. In place of the myths I will give one systematic way to motivate the meaning of time reversal in quantum theory, and argue that it is justifiably associated with the name. As in the case of Malament's perspective, skeptics may still wish to adopt alternatives to the standard use of the phrase `time reversal'. Fine: one is free to define terms how one chooses. But as with Malament's perspective on electromagnetism, this paper will aim to show just how much one is giving up by denying the standard definitions. The account builds up the meaning of time reversal in three stages, dissolving each of the three myths in turn along the way.

\section{First Stage: Time reversal is unitary or antiunitary}\label{sec:firststage}

\subsection{Wigner's theorem} Wigner's theorem is one of the central results of modern quantum theory, first presented by \citet{wigner1931}. The theorem is often glossed as showing that any transformation $A:\HH\rightarrow\HH$ on a separable Hilbert space that deserves to be called a `symmetry' must be unitary or antiunitary. The statement is more accurately put in terms of \emph{rays}, or equivalence classes of vectors related by a phase factor, $\Psi := \{ e^{i\theta}\psi \;|\; \psi\in\HH \text{ and } \theta\in\RR \}$. Since each vector $\psi$ in a ray gives the same expectation values, it is often said that rays are what best represent `physical' quantum states. There is an inner product on rays defined by the normed Hilbert space inner product $\Inn{\Psi,\Phi} := |\Inn{\psi,\phi}|$, where $\psi\in\Psi$ and $\phi\in\Phi$; this product is independent of which vectors in the rays are chosen. What Wigner presumed is that every symmetry, including time reversal, can be represented by a transformation $\mathbf{S}$ on rays that preserves the inner product, $\Inn{\mathbf{S}\Psi,\mathbf{S}\Psi} = \Inn{\Psi,\Phi}$. From this he argued for Wigner's Theorem, that every such transformation can be uniquely (up to a constant) implemented by either a unitary operator or an antiunitary operator.

Time reversal, as we shall see in the next section, falls into the latter `antiunitary' category. But before we get that far: why do we expect time reversal to preserve inner products between rays? Or, in terms of the underlying Hilbert space vectors, why should time reversal preserve transition probabilities? Of course, Wigner is free to define words however he likes. But one would like to have a more serious reason.

\subsection{Uhlhorn's theorem} Here is a general way to answer this question that I think is not very well-known. To begin, consider two rays that are orthogonal, $\Inn{\Psi,\Phi}=0$. In physical terms, this means that the two corresponding states are mutually exclusive: if one of them is prepared, then the probability of measuring the other is zero, in every experiment. To have a simple model in mind: take $\Psi$ and $\Phi$ to represent $z$-spin-up and $z$-spin-down eigenstates, which are orthogonal in this sense.

Suppose we interpret a `symmetry transformation' to be one that preserves orthogonality. For example, since $z$-spin-up and $z$-spin-down are mutually exclusive outcomes in an experiment, we suppose that this will remain the case when the entire experimental setup is `symmetry-transformed', by say a rigid rotation or by a translation in space. And vice versa: if two symmetry-transformed states are mutually exclusive, then we assume the original states must have also been mutually exclusive. In the particular case of quantum mechanics, we thus posit the following natural property of symmetry transformations: two rays are orthogonal if and only if the symmetry transformed states are too. \citet{uhlhorn1963} discovered that, surprisingly, this requirement is enough to establish that symmetries are unitary or antiunitary\footnote{A concise proof is given by \cite[Theorem 4.29]{varadarajan-geom}; I thank David Malament for pointing this out to me. Uhlhorn's theorem was considerably generalised by \citet{molnar2000generalwigner}; see \citet{chevalier2007a} for an overview.} (when the dimension of the Hilbert space is greater than 2).

\begin{thm}[Uhlhorn]\label{thm:Uhlhorn}
Let $\mathbf{T}$ be any bijection on the rays of a separable Hilbert space $\HH$ with $\dim\HH>2$. Suppose that $\Inn{\Psi,\Phi}=0$ if and only if $\Inn{\mathbf{T}\Psi,\mathbf{T}\Phi}=0$. Then,
\begin{equation*}
	\Inn{\mathbf{T}\Psi,\mathbf{T}\Phi}=\Inn{\Psi,\Phi}.
\end{equation*}
Moreover, there exists a unique (up to a constant) $T:\HH\rightarrow \HH$ that implements $\mathbf{T}$ on $\HH$ in that $\psi \in \Psi$ iff $T\psi \in \mathbf{T}\Psi$, where $T$ is either unitary or antiunitary and satisfies $\Norm{T\psi,T\phi} = \Norm{\psi,\phi}$ for all $\psi, \phi \in \HH$.
\end{thm}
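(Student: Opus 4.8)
The plan is to prove the second, constructive claim first and obtain the first as a corollary. Once we exhibit a $T\colon\HH\to\HH$ that is either unitary or antiunitary and implements $\mathbf{T}$, the identity $\Norm{T\psi,T\phi}=\Norm{\psi,\phi}$ is automatic (a unitary fixes $\Inn{\psi,\phi}$, an antiunitary sends it to its conjugate, and either way the modulus is unchanged), whence $\Inn{\mathbf{T}\Psi,\mathbf{T}\Phi}=\Norm{T\psi,T\phi}=\Norm{\psi,\phi}=\Inn{\Psi,\Phi}$ for any representatives $\psi\in\Psi$, $\phi\in\Phi$. So the whole burden is to construct $T$, and I would do this by recovering, from the orthogonality relation alone, enough linear geometry to invoke the fundamental theorem of projective geometry, and then pinning down the resulting field automorphism.

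First I would pass to the picture in which a ray is a one-dimensional subspace, i.e. a point of the projective space of $\HH$, and orthogonality of rays is orthogonality of lines. For a set $S$ of rays write $S^\perp$ for the set of rays orthogonal to every member of $S$; this is defined purely through the relation $\Inn{\cdot,\cdot}=0$. The key observation is that the closed subspace spanned by two rays $\Psi_1,\Psi_2$ is recovered as the double orthocomplement $\{\Psi_1,\Psi_2\}^{\perp\perp}$, so that the relation ``$\Psi$ lies in the two-dimensional subspace through $\Psi_1$ and $\Psi_2$'' (collinearity in projective terms) is expressible using orthogonality alone. Because the hypothesis gives $\Inn{\Psi,\Phi}=0$ iff $\Inn{\mathbf{T}\Psi,\mathbf{T}\Phi}=0$ in \emph{both} directions, $\mathbf{T}$ preserves $S\mapsto S^\perp$, hence $S\mapsto S^{\perp\perp}$, and so carries projective lines bijectively onto projective lines. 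This is the step where $\dim\HH>2$ does its work: it guarantees the orthocomplements are nondegenerate, so that the double complement really returns the two-dimensional span rather than collapsing, and it is exactly the dimension the theorem below requires.

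With collinearity-preservation established in both directions I would apply the fundamental theorem of projective geometry: a collineation of the projective space of a $\CC$-vector space of dimension at least three is induced by a semilinear bijection $A$ of $\HH$, so that $A(\lambda\psi)=\sigma(\lambda)A\psi$ for some field automorphism $\sigma$ of $\CC$, with $A$ taken to carry a chosen orthonormal basis of rays to their images. It then remains to identify $\sigma$ and to rescale. Evaluating orthogonality-preservation on suitable orthogonal pairs built from two basis vectors (so that $\Inn{A\psi,A\phi}=0$ whenever $\Inn{\psi,\phi}=0$) forces $\sigma$ to commute with complex conjugation and to fix the reals; since the only automorphism of $\RR$ is the identity and $\sigma(i)^2=\sigma(-1)=-1$, this leaves only $\sigma=\mathrm{id}$ or $\sigma=$ complex conjugation. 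The same computation shows $A$ scales orthonormal vectors by a common positive factor, so that after dividing by it $A$ becomes an isometry with $\Inn{A\psi,A\psi}=\|A\psi\|^2=\|\psi\|^2$: a unitary when $\sigma=\mathrm{id}$ and an antiunitary when $\sigma$ is conjugation. This isometry is the desired $T$.

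The main obstacle is conceptual rather than computational: converting the purely relational hypothesis about orthogonality into honest linear structure. That passage is precisely the content supplied by the fundamental theorem of projective geometry, and the two delicate points in bringing it to bear are (i) verifying rigorously that $\mathbf{T}$ is a genuine collineation in both directions, which is where the $\dim\HH>2$ nondegeneracy is used, and (ii) taming $\sigma$, since $\CC$ admits wildly discontinuous automorphisms and nothing in the bare collineation excludes them; what rules them out is that orthogonality is a \emph{positive}, essentially real condition, forcing $\sigma$ to respect the order on $\RR$. Finally, for uniqueness up to a constant: if $T$ and $T'$ both implement $\mathbf{T}$, then $T^{-1}T'$ fixes every ray and so acts as a scalar on each line; a short argument on independent vectors shows the scalar is common, and matching the unitary-or-antiunitary type forces it to have modulus one, giving $T'=cT$.
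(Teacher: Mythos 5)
The paper does not prove this theorem; it is quoted from \citet{uhlhorn1963}, with \citet[Theorem 4.29]{varadarajan-geom} cited for a concise proof. Your sketch reconstructs essentially that standard argument --- recover collinearity from double orthocomplements, invoke the fundamental theorem of projective geometry to get a semilinear bijection, then use positivity of the inner product to force the field automorphism $\sigma$ to be the identity or conjugation --- and the outline is correct, including the uniqueness argument at the end. One justification is slightly off: the double orthocomplement $\{\Psi_1,\Psi_2\}^{\perp\perp}$ returns the closed span in \emph{any} Hilbert space, regardless of dimension; the real work done by $\dim\HH>2$ is that it makes the projective dimension at least $2$, so that collinearity is a nontrivial relation (in dimension $2$ every triple of rays is collinear and the fundamental theorem of projective geometry has no content). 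Two points you gloss over but would need care in a full write-up: the fundamental theorem of projective geometry must be taken in a form valid for infinite-dimensional (separable) spaces, and the semilinear map it produces is defined on finite linear combinations of a basis, so one must use the continuity of $\sigma$ (once identified as $\mathrm{id}$ or conjugation) to extend the normalized $A$ to an isometry of all of $\HH$. Your identification of the two genuinely delicate points --- establishing the two-sided collineation property and taming the wild automorphisms of $\CC$ via the positivity of orthogonality --- matches where the cited proof spends its effort.
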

In other words, as long as a transformation preserves whether or not two states are mutually exclusive, it must either be unitary or antiunitary.

\subsection{Time reversal} The interpretation of Uhlhorn's theorem is perspicuous in the special case of time reversal, where it immediately dissolves our first myth. Suppose some transformation can be interpreted as involving `reversal of the direction of time'. That is, I wish to speak not just of `motion reversal' as some textbooks prefer to say, but `time reversal', whatever that should mean. Whatever else one might say about time reversal, let us at least suppose that two mutually exclusive states remain so under the time reversal transformation, in that the states $\Psi$ and $\Phi$ are orthogonal if and only if $T\Psi$ and $T\Phi$ are too. Why believe this, when nobody has ever physically `reversed time'? The reason is that \emph{whether two states are mutually exclusive has nothing to do with the direction of time}. Orthogonality is a statement about what is possible in an experimental outcome, independently of their time-development. Accepting this does not require any kind of lofty metaphysical indulgence. Orthogonality is simply not a time-dependent concept.

This is all that we need. We can immediately infer that time reversal preserves transition probabilities, and is implemented by a unitary or antiunitary operator. That is the power of Uhlhorn's theorem. Contrary to the first myth, there is indeed a reason to accept that time reversal preserves transition probabilities and thus is unitary or antiunitary. It emerges directly out of a reasonable constraint on what it means to reverse time, together with the mathematical structure of quantum theory.

\section{Second Stage: Why $T$ is Antiunitary}

\subsection{Antiunitarity} We have argued that time reversal must be unitary or antiunitary. But the standard definition further demands that it is antiunitary in particular. An antiunitary operator is a bijection $T:\HH\rightarrow\HH$ that satisfies,
\begin{enumerate}
  \item (adjoint inverse) $T^*T=TT^*=I$, and
  \item (antilinearity) $T(a\psi + b\phi) = a^*T\psi + a^*T\phi$.
\end{enumerate}
It is sometimes useful to note that these conditions are together equivalent to,
\begin{enumerate}
  \item[(3)] $\Inn{T\psi,T\phi}=\Inn{\psi,\phi}^*$.
\end{enumerate}
Properties (2) and (3) underlie claims that time reversal `involves conjugation'. They are also slippery properties that often throw beginners (and many experts) for a loop, since they require many of the familiar properties of linear operators to be subtly adjusted.

When is a transformation antiunitary, as opposed to unitary? It is not the `discreteness' of the transformation, since the parity transformation is discrete and unitary. It is rather a property that holds of all `time-reversing' transformations, including $T$, $PT$, $CPT$, and indeed any $UT$ where $U$ is a unitary operator. Once one has accepted that the time reversal operator $T$ is an antiunitary bijection, it follows that the transformations of the form $UT$ are exactly the antiunitary ones: if $T$ is antiunitary, then so is $UT$ when $U$ is unitary; and conversely, if $A$ is any antiunitary operator, then there exists a unitary $U$ such that $A = UT$, as one can easily check\footnote{The former follows immediately from the definition; the latter follows by setting $U := AT^{-1}$ and checking that $U$ is unitary.}.

Some of the mystery about antiunitary operators can be dissolved by noting that there is a similar property in classical Hamiltonian mechanics. In local coordinates $(q,p)$, interpreted as position and momentum, the instantaneous effect of time reversal is normally taken to preserve position and to reverse momentum $(q,p)\mapsto(q,-p)$. But it is easy to check that it is not a canonical transformation. The mathematical reason for this is that time reversal does not preserve the symplectic form $\omega = dq\wedge dp$, which is the geometric structure underpinning Hamilton's equations. Instead, the symplectic form reverses sign under time reversal. For this reason, time reversal in classical Hamiltonian mechanics is more correctly identified `anticanonical' or `antisymplectic', which is directly analogous to antiunitarity in quantum mechanics.

\citet{earman2002} has offered some `physical' motivation for an antiunitary time reversal operator in quantum mechanics:
\begin{quote}
[T]he state $\psi(x,0)$ at $t=0$ not only determines the probability distribution for finding the particle in some region of space at $t=0$ but it also determines whether at $t=0$ the wave packet is moving, say, in the $+x$ direction or in the $-x$ direction. ...
So instead of making armchair philosophical pronouncements about how the state cannot transform, one should instead be asking: How can the information about the direction of motion of the wave packet be encoded in $\psi(x,0)$? Well (when you think about it) the information has to reside in the phase relations of the components of the superposition that make up the wave packet. And from this it follows that the time reversal operation must change the phase relations.
\end{quote}
In short, phase angles in quantum theory contain information that is temporally directed. As a consequence, one cannot reverse time without reversing those phase angles. This is precisely what an antiunitary operator does, since $Te^{i\theta}\psi = e^{-i\theta}T\psi$.

I find Earman's motivation compelling. However, one would like to have a more general and systematic derivation of antiunitarity. I will consider two such derivations below. The first is a common textbook argument \citep[see e.g.][]{sachs1987}, which works when there is a position and momentum representation, but has certain shortcomings. I will then turn to what I take to be a better and much more general way to understand the origin of antiunitarity, which stems from the work of Wigner.

\subsection{The position-momentum approach to time reversal}\label{subs:pos-mom} Suppose we are dealing with a system involving position $Q$ and momentum $P$ satisfying the canonical commutation relations, $(QP - PQ) = i\hbar$. Suppose we can agree that time reversal preserves position while reversing momentum, $TQT^{-1}=Q$ and $TPT^{-1} = -P$.  Then, applying time reversal to both sides of the commutation relation we find,
\begin{align*}
    Ti\hbar T^{-1} & = T(QP - PQ)T^{-1} = (TQT^{-1})(TPT^{-1}) - (TPT^{-1})(TQT^{-1})\\
                  & = -(QP - PQ) = -i\hbar.
\end{align*}
Since $i\hbar$ is a constant, this outcome is not possible if $T$ is unitary, since all unitary operators are linear. So, since $T$ is not unitary, it can only be antiunitary, following the discussion of the previous section.

Why is it that time reversal preserves position and reverses momentum in quantum mechanics? It is often suggested that we must simply do what is already done in classical mechanics. But why do we do that? And, even presuming we have a good grip on time reversal in classical mechanics, why should time reversal behave this way in quantum mechanics too? 

Craig Callender has argued that it is because of Ehrenfest's theorem. This clever idea can be made precise as follows. Ehrenfest's theorem says that for any quantum state, the expectation values of quantum position $Q$ and momentum $P$ satisfy Hamilton's equations as they evolve unitarily over time. This means in particular that $q:=\Inn{\psi,Q\psi}$ and $p:=\Inn{\psi,P\psi}$ can be viewed as canonical position and momentum variables given a quantum state $\psi$. Now, assume classical time reversal preserves this canonical position and reverses the sign of momentum, $(q,p)\mapsto(q,-p)$. Assume also that quantum time reversal corresponds to a transformation $\psi\mapsto T\psi$ that respects the classical definition, in that it satisfies $(q,p)\mapsto(q,-p)$ when $q$ and $p$ are defined as above in terms of expectation values. These assumptions amount to the requirement that for all $\psi$,
\begin{align*}
  \Inn{T\psi,QT\psi} & = \Inn{\psi,Q\psi}\\
  \Inn{T\psi,PT\psi} & = -\Inn{\psi,P\psi}.
\end{align*}
This implies\footnote{We have $\Inn{\psi,T^*QT\psi} = \Inn{\psi,Q\psi}$ and $\Inn{\psi,T^*PT\psi} = -\Inn{\psi,P\psi}$ for arbitrary $\psi$, which implies that $T^*QT=Q$ and $T^*PT=-P$ \citep[see e.g.][Theorem I of Volume II, Chapter XV \S 2]{messiah1999}. A technical qualification is needed due to the fact that $Q$ and $P$ are unbounded: by `arbitrary $\psi$' we mean all $\psi$ in the domain of the densely-defined operator $Q$ (and similarly for $P$, respectively).} that $TQT^*=Q$ and $TPT^*=-P$. On this sort of thinking Callender concludes that, ``[s]witching the sign of the quantum momentum, therefore, is necessitated by the need for quantum mechanics to correspond to classical mechanics'' \citep[p.266]{callender2000time}.

Callender's suggestion certainly helps to clarify the relationship between classical and quantum time reversal. It can be applied when we restrict attention to quantum systems with a position and momentum representation. However, it does require us to understand time reversal in classical mechanics before knowing its meaning in quantum mechanics. That is perhaps unusual if one takes quantum theory to be the more fundamental or correct description of nature. And, although he does not mention it, Callender's argument also relies on a particular correspondence rule, that quantum time reversal is a transformation $\psi\mapsto T\psi$ that gives rise to classical time reversal on expectation values. Although this assumption is plausible, it is not automatic.

There are other shortcomings of the position-momentum approach. Some have complained that it is `basis-dependent' in the sense of requiring particular position and momentum operators to be chosen \citep{BiedenharnSudarshan}. A more difficult worry in this vein is that many quantum systems do not even admit such operators, in the sense that they do not admit a representation of the canonical commutation relations. This is often the case in relativistic quantum theory, where localised position operators are difficult if not impossible to define\footnote{This is a consequence of a class of no-go theorems established by \citet{hegerfeld1994a}, \citet{malament1996}, \citet{halvclift2002particles}, and others.}.

It would be nice to have a more general way to understand why time reversal is antiunitary, without mere appeal to convention, and without appeal to classical mechanics. In what follows, I will point out one such account. Let me begin with a discussion of invariance.

\subsection{The meaning of invariance} Many laws of nature are associated with a set of dynamical trajectories, which are typically solutions to some differential equation. These solutions represent the possible ways that states of the world can change over time. We say that such a law is \emph{invariant} under a transformation if and only if this set of dynamical trajectories is preserved by that transformation. In other words, invariance under a symmetry transformation means that if a given dynamical trajectory is possible according to the law, then so is the symmetry-transformed trajectory.

The same thinking applies in the language of quantum theory. Let the dynamical trajectories of a general quantum system be unitary, meaning that an initial quantum state $\psi$ evolves according to $\psi(t) = e^{-itH}\psi$ for each real number $t$, where $H$ is a fixed self-adjoint operator, the `Hamiltonian'\footnote{This is the `integral form' of Schr\"odinger's equation: taking the formal derivative of both sides and multiplying by $i$ yields, $i\tfrac{d}{dt}\psi(t) = -i^2He^{-itH}\psi = H\psi(t)$.}. Suppose a symmetry transformation takes each trajectory $\psi(t)$ to a new trajectory $\phi(t)$. If each transformed trajectory is also unitary, in that $\phi(t) = e^{-itH}\phi$ for all $t$, then we say that the quantum system is \emph{invariant} under the symmetry transformation.

Our concern in this paper will be with invariance under transformations that correspond to `reversing time'. There are many of them: one can reverse time and also translate in space; reverse time and also rotate; and so on. But these transformations share the property that, in addition to however they transform a state $\psi$ (possibly by the identity), they also reverse the order of states in each trajectory.

Call the latter `time-order reversal'. What exactly does that mean to reverse the time-order of a quantum trajectory $\psi(t)$? For example, $\psi(t)\mapsto\psi(-t)$ reverses time order, but so does $\psi(t)\mapsto\psi(1/e^t)$. Which is correct? A first guiding principle is that time-order reversal should not change the duration of time between any two moments; otherwise it would do more than just order reversal\footnote{No such criterion is adopted by \citet{Peterson2015a}, which leads him to consider a wealth of non-standard ways to reverse order in time.}. To enforce this we take time-order reversal to be a linear transformation of the reals, $t\mapsto at + b$ for some real $a,b$. A second guiding principle is to take `reversal' to mean that two applications of the transformation are equivalent to the identity transformation; this is to say that $t\mapsto at + b$ is an involution. The only order-reversing linear involutions of the reals have the form $t\mapsto -t + t_0$ for some real $t_0$. So, since the quantum theories we are concerned with here are time translation invariant, we may set $t_0=0$ without loss of generality, and take time-order reversal to have the form $\psi(t)\mapsto\psi(-t)$ as is usually presumed.

This time-order reversal must now be combined with a bijection $T:\HH\rightarrow\HH$ on instantaneous states. So, the time-reversing transformations can be minimally identified as bijections on the set of trajectories $\psi(t)=e^{-itH}\psi$ that take the form,
\[
  \psi(t) \mapsto T\psi(-t) = Te^{itH}\psi,
\] 
where $T$ at this point is an arbitrary unitary or antiunitary operator, possibly even the identity operator $I\psi:=\psi$. As with general symmetry transformations, we say that a quantum system is \emph{invariant} under these `$T$-reversal' transformations if and only if each trajectory $T\psi(-t)$ can be expressed as a trajectory $\phi(t)$ that satisfies the same unitary law, $\phi(t) = e^{-itH}\phi$. This statement can be summarised in a convenient form. Defining $\phi(t) := T\psi(-t) = Te^{itH}\psi$ (and hence that $\phi:=T\psi$), \emph{$T$-reversal invariance} means that,
\[
  Te^{itH}\psi = e^{-itH}T\psi
\]
for all $\psi$.

\subsection{A general approach to time reversal} Suppose that we know almost nothing about some $T$-reversal transformation, other than that it is unitary or antiunitary. But let us suppose that, whenever this transformation represents `the reversal of the direction of time', possibly together with other transformations too, then there is at least one realistic dynamical system that is $T$-reversal invariant in the sense defined above. Here is what that means in more precise terms. A realistic dynamical system requires a Hamiltonian that is not the zero operator, since otherwise no change would occur at all. Moreover, all known Hamiltonians describing realistic quantum systems are bounded from below, which we will express by choosing a lower bound of $0 \leq \Inn{\psi,H\psi}$. Finally, suppose that at least one of those Hamiltonians satisfies the $T$-invariance property that $Te^{itH}\psi = e^{-itH}T\psi$. Of course, some Hamiltonians will fail to satisfy this, such as those appearing in the theory of weak interactions, and this is perfectly compatible with our argument. However, we do suppose that \emph{at least one} of these Hamiltonians --- perhaps a particularly simple one with no interactions --- is $T$-reversal invariant. This turns out to be enough to establish that $T$ is antiunitary.\footnote{This proposition, a version of which is given by \citet[\S 2]{roberts-dissertation}, makes precise a strategy that was originally suggested by \citet[\S 20]{wigner1931}. I thank David Malament for suggestions that led to improvements in this formulation.}.

\begin{prop}\label{prop:1}
Let $T$ be a unitary or antiunitary bijection on a separable Hilbert space $\HH$. Suppose there exists at least one densely-defined self-adjoint operator $H$ on $\HH$ that satisfies the following conditions.
	\begin{enumerate}
		\item[(i)] (positive) $0 \leq \Inn{\psi,H\psi}$ for all $\psi$ in the domain of $H$.
		\item[(ii)] (non-trivial) $H$ is not the zero operator.
		\item[(iii)] ($T$-reversal invariant) $Te^{itH}\psi = e^{-itH}T\psi$ for all $\psi$.
	\end{enumerate}
Then $T$ is antiunitary.
\end{prop}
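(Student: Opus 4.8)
The plan is to argue by contradiction: since $T$ is assumed unitary or antiunitary, it suffices to rule out the unitary case, and for this I would exploit the clash between the positivity of $H$ demanded by (i) and the reflected spectrum that (iii) would force on $H$ if $T$ were linear.

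First I would differentiate the $T$-reversal invariance condition (iii) at $t=0$. For $\psi$ in the domain of $H$, Stone's theorem guarantees $\tfrac{d}{dt}\big|_{t=0}e^{itH}\psi = iH\psi$ in norm. Because a unitary $T$ is bounded, hence continuous, it commutes with this limit, and because it is also linear it commutes with the scalar $i$; differentiating the left-hand side of (iii) thus yields $iTH\psi$. Since the two sides of (iii) agree as functions of $t$, the right-hand side $e^{-itH}T\psi$ must be differentiable at $t=0$ as well, which forces $T\psi$ into the domain of $H$ and gives derivative $-iHT\psi$. Equating the two produces the operator relation $TH\psi = -HT\psi$ on the domain of $H$, i.e.\ $THT^{-1} = -H$.

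Next I would draw the contradiction from positivity. Conjugating a self-adjoint operator by a unitary preserves its spectrum, so $THT^{-1} = -H$ forces $\sigma(H) = \sigma(-H) = -\sigma(H)$; combined with condition (i), which confines $\sigma(H)\subseteq[0,\infty)$, this leaves only $\sigma(H)=\{0\}$ and hence $H=0$, contradicting (ii). Equivalently, and more elementarily, from $THT^{-1}=-H$ one computes $\Inn{T\psi,HT\psi} = \Inn{\psi, T^{-1}HT\psi} = -\Inn{\psi,H\psi}$ using $T^*=T^{-1}$; positivity applied to both $\psi$ and $T\psi$ then squeezes $\Inn{\psi,H\psi}=0$ for every $\psi$ in the domain, whence $H=0$. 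Either route contradicts non-triviality, so $T$ cannot be unitary and must be antiunitary.

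The main obstacle is the domain bookkeeping around the unbounded $H$ in the differentiation step: one must check that the relevant derivatives exist and that passing $T$ through the norm limit is legitimate, which is precisely where boundedness of $T$ does the work. It is also worth flagging what makes the argument \emph{select} antiunitarity rather than merely forbid it in general: had $T$ been antilinear, its conjugation of the scalar $i$ would have flipped the sign on the left-hand derivative too, turning the relation into $THT^{-1} = +H$, which is perfectly consistent with a positive nonzero $H$. The contradiction is therefore driven entirely by how a \emph{linear} $T$ commutes with $i$, and that is the crux I would want the reader to see.
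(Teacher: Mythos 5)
Your proposal is correct and follows essentially the same route as the paper's proof: both extract the generator relation $THT^{-1}=-H$ from condition (iii) via Stone's theorem (the paper by uniqueness of the generator of $e^{itH}$, you by differentiating at $t=0$, which amounts to the same thing but with more explicit domain bookkeeping), and both then squeeze $\Inn{\psi,H\psi}=0$ from positivity applied to $\psi$ and $T\psi$, contradicting non-triviality. Your closing remark that an antilinear $T$ would instead yield $THT^{-1}=+H$ is exactly the reason the paper's reductio targets only the unitary case.
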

\begin{proof}
Condition (iii) implies that $e^{itH} = Te^{-itH}T^{-1} = e^{T(-itH)T^{-1}}$. Moreover, Stone's theorem guarantees the generator of the unitary group $e^{itH}$ is unique when $H$ is self-adjoint, so $itH = -TitHT^{-1}$. Now, suppose for reductio that $T$ is unitary, and hence linear. Then we can conclude from the above that $itH = -itTHT^{-1}$, and hence $THT^{-1}=-H$. Since unitary operators preserve inner products, this gives, $\Inn{\psi,H\psi} = \Inn{T\psi,TH\psi} = -\Inn{T\psi,HT\psi}$. But Condition (i) implies both $\Inn{\psi,H\psi}$ and $\Inn{T\psi,HT\psi}$ are non-negative, so we have,
\[
  0 \leq \Inn{\psi,H\psi} = -\Inn{T\psi,HT\psi} \leq 0.
\]
It follows that $\Inn{\psi,H\psi} = 0$ for all $\psi$ in the domain of $H$. Since $H$ is densely defined, this is only possible if $H$ is the zero operator, contradicting Condition (ii). Therefore, since $T$ is not unitary, it can only be antiunitary.
\end{proof}

This proposition applies equally to both non-relativistic quantum mechanics and to relativistic quantum field theory. It can also be straightforwardly extended to contexts in which energy is negative, by replacing Premise (i) with the (i${}^*$): the spectrum of $H$ is bounded from below but not from above; then the argument above proceeds in exactly the same way\footnote{In particular, we get $r\leq\Inn{\psi,H\psi}$ and $r\leq\Inn{T\psi,HT\psi}$, so $r \leq \Inn{\psi,H\psi} = -\Inn{T\psi,HT\psi} \leq -r$, which contradicts the assumption that the spectrum of $H$ is unbounded from above. I thank David Wallace for a discussion that led to this variation.}.

This line of argument is somewhat more abstract than the commutation relations approach. However, it is much more general. We do not need to appeal to time reversal in classical mechanics, or even have a representation of the commutation relations. We do not even presume that time reversal transforms instantaneous states by anything other than the identity. But we do \emph{derive} that it does, contrary to the second myth discussed at the outset. The derivation hinges on the presumption that there is at least one possible dynamical system --- not necessarily even one that \emph{actually} occurs! --- that is time reversal invariant. If there is, then time reversal can only be antiunitary, \emph{pace} the misgivings of the authors discussed above.

A similar but slightly stronger presumption has been advocated by \citet[\S 1.4]{sachs1987}, which he calls ``kinematic admissibility''. According to Sachs, ``[i]n order to express explicitly the independence between the kinematics and the nature of the forces, we require that the transformations leave the equations of motion invariant \emph{when all forces or interactions vanish}'' \citep[p.7]{sachs1987}. Requiring that admissible symmetry transformations have this property is equivalent to requiring that the Hamiltonian for free particles and fields is preserved by such symmetries. This is a special case of what we have assumed above, since free Hamiltonians are generally non-trivial and positive. It is also quite reasonable in my view. However, we simply do not need it to establish the result above. Time reversal is antiunitary as long as there is \emph{some} positive non-trivial Hamiltonian that is time reversal invariant. Whether that turns out to be the free Hamiltonian is beside the point.

Although this account of time reversal does not come for free, I think it does help to clarify what's at stake in debates like those of \citet{callender2000time} and \citet{albert2000}. Denying that time reversal is antiunitary means that a non-trivial realistic quantum system is never time reversal invariant, under any circumstances whatsoever. Even an empty system with no interactions would be asymmetric in time. Earman has called the disparity this creates with respect to the time symmetry\footnote{In fact the problem is more complicated: the Callender and Albert arguments seemingly entail that one must reject the standard antisymplectic time reversal operator in classical Hamiltonian mechanics as well, which would lead one to infer that classical Hamiltonian mechanics is not time reversal invariant even for the free particle. This avoids the perversity Earman identifies at the cost of introducing a new one: the failure of classical time reversal invariance!} found in classical mechanics ``the symptom of a perverse view'' \citep[p.249]{earman2002}. But even setting aside moral outrage, there is little practical use in identifying time reversal with a transformation that doesn't make any distinctions at all, not even between a free particle and one experiencing an important time-directed process like a weakly interacting meson. To those who value the alignment of philosophy of physics with the practice of physics, this may be too high a price to pay, especially for a view that is motivated by seemingly arbitrary metaphysics. 

\section{Third Stage: Transformation rules}

\subsection{Transformation rules} We now turn to what explains why time reversal preserves position, $Q\mapsto Q$, reverses momentum $P\mapsto -P$, and reverses spin $\sigma\mapsto-\sigma$. The fact that time reversal is antiunitary is not enough, since there are many such operators that do not do this. The commutation relations are not enough either\footnote{Example: if $[Q,P]\psi=i\psi$, then $[Q+P,P]\psi=i\psi$. But although both pairs $(Q,P)$ and $(Q+P,P)$ satisfy the canonical commutation relations, an antiunitary $T$ cannot preserve both $Q$ and $Q+P$ while also reversing $P$.}. So, what is the origin of these rules? The myth is that it can only be a matter of convention, or else an appeal to classical mechanics. This section will dissolve that final myth. There is a fairly general strategy for determining how time reversal will transform a given observable, which draws on how we understand the symmetries generated by that observable.

Let me start by uniquely deriving of how time reversal transforms position and momentum, then spin, and then discuss how the strategy can be applied to more general observables. Along the way, I will also give a new perspective on why $T^2=-1$ for quantum systems consisting of an odd number of fermions.

\subsection{Position and momentum transformations} 

Standard treatments of time reversal take the position and momentum transformation rules for granted. Non-standard treatments such as \citet{callender2000time} and \citet{albert2000} deny that these transformation rules hold in general\footnote{One might derive the $p\mapsto-p$ transformation rule on the non-standard view of time reversal whenever $dq/dt = p/m$ for some $m\neq0$. But this is not generally the case, for example in electromagnetism when velocity is a function of both momentum and electromagnetic potential.}, although Callender argues that the momentum transformation rule can be justified by appeal to a classical correspondence rule. In this section, I would like to point out that one can go beyond both of these treatments. The standard transformation rules can be derived from plausible assumptions about the nature of time, without appeal to classical mechanics (or any other theory). Our account makes this possible because we have adopted an independent argument for antiunitarity above. Thus we are free to use antiunitarity in the derivation of the position and momentum transformation rules. This is exactly the opposite of the standard textbook argument described in Section \ref{subs:pos-mom}.

Begin with momentum, defined as the generator of spatial translations. The spatial translations are given by a strongly continuous one-parameter unitary representation $U_a$, with the defining property\footnote{Translations can be equivalently defined by $U_aE_\Delta U_a^* := E_{\Delta - a}$, where $\Delta\mapsto E_\Delta$ is the projection-valued measure associated with the position operator $Q$.} that if $Q$ is the position operator, then $U_aQU_a^* = Q + aI$, for all $a$. At the level of  wavefunctions in the Schr\"odinger representation, this group has the property that $U_a\psi(x) = \psi(x-a)$. In other words, translations quite literally `shift' the position of a quantum system in space by a real number $a$. This group can be written $U_a = e^{iaP}$ by Stone's theorem, and the self-adjoint generator $P$ is what we mean by momentum.

The strategy I'd like to propose begins by asking how the meaning of time reversal changes when we move to a different location in space. Let us take as a principled assumption that it does not. After all, the concept of `reversing time' should not have anything to do with where we are located in space. This means that if we first time reverse a state and then translate it, the result is the same as when we first translate and then time reverse,
\[
  U_aT\psi = TU_a\psi.
\]
Since $U_a = e^{iaP}$, this `homogeneity' of time reversal has implications for the momentum operator $P$. Namely,
\[
  e^{iaP} = Te^{iaP}T^{-1} = e^{T(iaP)T^{-1}} = e^{-iaTPT^{-1}}. 
\] 
where the final equality follows from the antiunitarity of $T$. This implies $TPT^{-1} = -P$, since the generator of $U_a=e^{iaP}$ is unique by Stone's theorem. Thus we have our first transformation rule: $TPT^{-1} = -P$. We do not need to take this fact for granted after all. It is encoded in the homogeneity of time reversal in space.

\citet{callender2000time} and \citet{albert2000} have expressed skepticism about the presumption that time reversal should do anything at all at an instant. Viewing $P(t)$ in the Heisenberg picture, this is to express skepticism that time reversal truly transforms $P\mapsto-P$. Let me emphasize again that we have not \emph{presumed} any such principle here: rather, we have derived it from more basic principles. Namely, we began with an argument that $T$ is antiunitary, and then showed that $P\mapsto-P$ follows so long as the meaning of time reversal does not depend on one's location in space.

From this perspective the transformation rule for $Q$ is even more straightforward: if time reversal does not depend on location in space, then we can equally infer that $TQT^{-1} = Q$. Alternatively, we could follow a strategy similar to the one above, by the considering the group of \emph{Galilei boosts} defined by $V_b = e^{iaQ}$. Here it makes sense to view time reversal as reversing the direction of a boost, just as the change in position of a body over time changes sign when we watch a film in reverse. In particular, if we time reverse a system and then apply a boost in velocity, then this is the same as if we had boosted in the opposite spatial direction and then applied time reversal, $V_bT\psi = TV_{-b}\psi$. Following exactly the same reasoning above we then find that $TQT^{-1} = Q$. Thus, from either the homogeneity of space or the reversal of velocities under time reversal, we derive the transformation rule $Q\mapsto Q$ as well.

In the Schr\"odinger `wavefunction' representation in which $Q\psi(x)=x\psi(x)$ and $P\psi(x)=i\tfrac{d}{dx}\psi(x)$, we can define the operator $T$ that implements these transformations as follows. Let $T=K$ be the conjugation operator in this representation, which is to say the operator that transforms each wavefunction $\psi(x)$ to its complex conjugate, $K\psi(x)= \psi(x)^*$. It follows immediately from this definition that $TQT^{-1}=Q$ and $TPT^{-1} = -P$. And it is not just that we \emph{can} define $T$ in this way. Given an irreducible representation of the commutation relations in Weyl form\footnote{The commutation relations in Weyl form state that $e^{iaP}e^{ibQ} = e^{iab}e^{ibQ}e^{iaP}$. This implies the ordinary commutation relations $[Q,P]\psi=i\psi$ but is expressed in terms of bounded operators.}, this characterisation of the time reversal operator is unique up to a constant:

\begin{prop}[uniqueness of $T$]
  Let $(U_a=e^{iaQ},V_b=e^{ibP})$ be a strongly continuous irreducible unitary representation of the commutation relations in Weyl form, and let $K$ be the conjugation operator in the Schr\"odinger representation. If $T$ is antilinear and satisfies $TU_aT^{-1}=U_{a}$ and  $TV_bT^{-1}=V_{-b}$, then $T=cK$ for some complex unit $c$.
\end{prop}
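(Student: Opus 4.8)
The plan is to reduce the claim to Schur's lemma. The key preliminary observation is that the conjugation operator $K$ is \emph{itself} a solution of the hypotheses: since $KQK^{-1}=Q$ and $KPK^{-1}=-P$ in the Schr\"odinger representation (as already noted above), exponentiating shows that $K$ satisfies the very same two intertwining relations imposed on $T$, namely $KU_aK^{-1}=U_a$ and $KV_bK^{-1}=V_{-b}$ for all $a,b$. So both $T$ and $K$ lie in the same ``solution set,'' and the content of the proposition is precisely that this set is a single orbit under multiplication by scalars.

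With that in hand, I would form the composite $W := TK^{-1}$. Because $T$ and $K$ are each antilinear, their composite $W$ is \emph{linear}; and since $K$ is an involution ($K^{-1}=K$), $W$ is a genuine bijection of $\HH$ (indeed unitary, being a product of two antiunitaries). The heart of the argument is then to check that $W$ commutes with the entire representation. Using $K^{-1}U_aK=KU_aK^{-1}=U_a$ together with the hypothesis on $T$, one computes $WU_aW^{-1}=T(K^{-1}U_aK)T^{-1}=TU_aT^{-1}=U_a$; and using $K^{-1}V_bK=V_{-b}$ together with $TV_{-b}T^{-1}=V_b$, one gets $WV_bW^{-1}=TV_{-b}T^{-1}=V_b$. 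Thus the linear operator $W$ commutes with every $U_a$ and every $V_b$.

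Now I would invoke irreducibility. Since $(U_a,V_b)$ is an irreducible unitary representation of the Weyl relations, its commutant consists only of scalar multiples of the identity; because $W$ is a bounded (in fact unitary) operator commuting with all $U_a$ and $V_b$, Schur's lemma forces $W=cI$ for some $c\in\CC$. Rearranging $W=TK^{-1}$ then gives $T=WK=cK$, as desired. Finally, that $c$ is a complex \emph{unit} follows from norm preservation: $W=cI$ is unitary (equivalently, $T$ is antiunitary), so $|c|=1$.

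The step I expect to be the main obstacle is the clean bookkeeping of the antilinear conjugations — in particular verifying that $K$ satisfies \emph{exactly} the same relations as $T$, taking care that conjugating an exponential by an antilinear operator flips the sign of the $i$ in the generator, so that the two sign conventions in the hypotheses are matched precisely by $K$. Once the relations are aligned, the appeal to Schur's lemma is essentially automatic, provided one records that $W$ is a bounded operator so that the trivial-commutant statement applies directly; were $T$ assumed only antilinear rather than antiunitary, one would first need to argue separately that $W$ is bounded before concluding $W=cI$.
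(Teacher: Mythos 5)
Your proof is correct and follows essentially the same route as the paper's: form the unitary composite $TK$, check that it commutes with every $U_a$ and $V_b$, invoke Schur's lemma on the irreducible representation to get $TK=cI$ with $|c|=1$, and multiply by $K$ using $K^2=I$. The only difference is that you explicitly verify the step the paper leaves implicit --- that $K$ itself satisfies the same intertwining relations as $T$, so that the antilinear sign flips cancel in the computation of $WU_aW^{-1}$ and $WV_bW^{-1}$ --- which is a worthwhile piece of bookkeeping but not a different argument.
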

\begin{proof}
  For any such antilinear involution $T$, the operator $TK$ is unitary (since it is the composition of two antiunitary operators), and commutes with both $U_a$ and $V_b$. Therefore it commutes with the entire representation. But the representation is irreducible, so by Schur's lemma $TK=c$ for some $c\in\CC$, which is a unit $c^*c=1$ because $TK$ is unitary. So, since $K^2=I$, we may multiply on the right by $K$ to get that $cK=TK^2=T$ as claimed.
\end{proof}

\subsection{Spin observables}

Angular momentum can be defined as a set of generators of spatial rotations in a rest frame, and spin observables form one such set. This is what is meant when it is said that spin is a `kind of angular momentum'. For example, as is well known, the Pauli observables $\{I,\sigma_1,\sigma_2,\sigma_3\}$ for a spin-$\tfrac{1}{2}$ particle give rise to a degenerate group of spatial rotations,
\[
  R^j_\theta = e^{i\theta\sigma_j}, \;\; j=1,2,3,
\]
in which there are two distinct elements ($R^j_\theta$ and $-R^j_\theta$) for each spatial rotation. This owes to the fact that the group generated by the Pauli matrices is isomorphic to $SU(2)$, the double covering group usual group of spatial rotations $SO(3)$. Nevertheless, each operator $R^j_\theta$ generated by a Pauli observable $\sigma_j$ can be unambiguously interpreted as representing a rotation in space.

We can now adopt our strategy from before, and ask: how does the meaning of time reversal change under spatial rotations? And here again the answer should be `not at all', insofar as the meaning of `reversal in time' does not have anything to do with orientation in space. This means in particular that if we rotate a system to a new orientation, apply time reversal, and then rotate the system back again, the result should be the same as if we had only applied time reversal in the original orientation. Equivalently, time reversal commutes with spatial rotations,
\[
  R_\theta T\psi = TR_\theta\psi.
\]
But from this it follows that $R_\theta = e^{i\theta\sigma_j} = Te^{i\theta\sigma_j}T^{-1} = e^{-i\theta T\sigma_j T^{-1}}$, which implies that $T\sigma_j T^{-1} = -\sigma_j$ for each $j=1,2,3$. As a result, time reversal transforms the Pauli spin observables as $\sigma_j \mapsto -\sigma_j$, as is standardly presumed. As with position and momentum, the spin transformation rules for time reversal are more than a convention: they arise directly out of the fact time reversal is isotropic in space. 

We can give the explicit definition of this $T$ by expressing the Pauli spin observables in the standard $z$-eigenvector basis as $\sigma_1=\paulix$, $\sigma_2=\pauliy$, $\sigma_3=\pauliz$. Let $K$ be the conjugation operator that leaves each of the basis vectors $\{\pzup,\pzdown\}$ invariant, but which maps a general vector $\psi=\alpha\pzup + \beta\pzdown$ to its conjugate $\psi^* = \alpha^*\pzup + \beta^*\pzdown$. Then one can easily check that the transformation $T = \sigma_2 K$ reverses the sign of each Pauli observable: since $K\sigma_2K^* = -\sigma_2$ and $K\sigma_iK^* = \sigma_i$ for $i=1,3$ we have,
\begin{align*}
  T\sigma_iT^* & = (\sigma_2K)\sigma_i(K\sigma_2) = \sigma_2(\sigma_i)\sigma_2 = -\sigma_i\\
  T\sigma_2T^* & = (\sigma_2K)\sigma_2(K\sigma_2) = \sigma_2(-\sigma_2)\sigma_2 = -\sigma_2.
\end{align*}
Thus, unlike time reversal in the Schr\"odinger representation, time reversal for spin is conjugation times an additional term $\sigma_2$ needed to reverse the sign of the matrices that don't have imaginary components. This immediately implies the famous result that, for a spin-$\tfrac{1}{2}$ particle, applying time reversal twice fails to bring you back to where you started, but rather results in a global change of phase:
\[
  T^2 = \sigma_2K\sigma_2K = \sigma_2(-\sigma_2)=-I.
\]
This property is in fact unavoidable: as before, there is a uniqueness result\footnote{A version of this was observed by \citet[Proposition 4]{roberts2012a}.} for the definition of $T$ for spin systems, which includes the fact that $T^2=-I$:

\begin{prop}[uniqueness of $T$ for spin]
  Let $\sigma_1$, $\sigma_2$, $\sigma_3$ be an irreducible unitary representation of the Pauli commutation relations, and let $K$ be the conjugation operator in the $\sigma_3$-basis. If $T$ is any antiunitary operator satisfying $Te^{i\sigma_j}T^{-1}=e^{i\sigma_j}$ for each $j=1,2,3$, then $T=c\sigma_2K$ for some complex unit $c$, and $T^2=-I$.	
\end{prop}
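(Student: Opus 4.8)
The plan is to follow exactly the strategy of the preceding uniqueness proof: build a unitary ``comparison'' operator out of $T$ and a known reference solution, show it commutes with the whole irreducible representation, and then apply Schur's lemma. The natural reference is the explicit operator $T_0 := \sigma_2 K$, since the calculation immediately above this proposition already verifies that $T_0\sigma_j T_0^{-1} = -\sigma_j$ for each $j$, and hence that $T_0 e^{i\sigma_j}T_0^{-1}=e^{i\sigma_j}$; thus $T_0$ itself satisfies the hypotheses, and every admissible $T$ should differ from it only by a phase.

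First I would pass from the finite condition $Te^{i\sigma_j}T^{-1}=e^{i\sigma_j}$ to the infinitesimal rule $T\sigma_j T^{-1}=-\sigma_j$. Unlike the position--momentum case, no appeal to Stone's theorem is needed here, because $\sigma_j^2=I$ forces the exponential to truncate to the finite combination $e^{i\sigma_j}=\cos(1)\,I + i\sin(1)\,\sigma_j$. Applying the antilinear $T$, the real coefficient $\cos(1)$ passes through unchanged while the imaginary coefficient is conjugated, $T(i\sin(1)\,\sigma_j)T^{-1} = -i\sin(1)\,T\sigma_j T^{-1}$; comparing this with $\cos(1)\,I + i\sin(1)\,\sigma_j$ and using $\sin(1)\neq 0$ yields $T\sigma_j T^{-1}=-\sigma_j$ for each $j=1,2,3$.

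Next I would set $W := T T_0^{-1}$. Since $T$ and $T_0^{-1}$ are both antiunitary, their composite $W$ is unitary (the same fact used in the previous proof). From $T_0\sigma_j T_0^{-1}=-\sigma_j$ one gets $T_0^{-1}\sigma_j T_0 = -\sigma_j$, so $W\sigma_j W^{-1} = T(T_0^{-1}\sigma_j T_0)T^{-1} = T(-\sigma_j)T^{-1} = \sigma_j$. Hence $W$ commutes with every $\sigma_j$, and therefore with the entire representation. By irreducibility and Schur's lemma $W=cI$ for some $c\in\CC$, which is a unit $c^*c=1$ because $W$ is unitary. Thus $T = W T_0 = c\sigma_2 K$, as claimed. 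Finally, computing $T^2 = (c\sigma_2 K)(c\sigma_2 K)$, the scalar $c$ picked up in the second factor is conjugated as it passes through the antilinear $\sigma_2 K$, leaving $T^2 = c\,c^*(\sigma_2 K)^2 = (\sigma_2 K)^2 = \sigma_2(K\sigma_2 K) = \sigma_2(-\sigma_2) = -I$, independent of the phase $c$.

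The main obstacle is not conceptual but bookkeeping: one must track carefully when scalars are conjugated as they pass through the antilinear operators $T$, $T_0$, and $K$. The two places this matters are the extraction step, where the conjugation of $i\sin(1)$ is exactly what flips the sign to give $\sigma_j\mapsto-\sigma_j$, and the final computation of $T^2$, where the cancellation $c\,c^*=1$ is what makes the result phase-independent. Everything else reduces to the same Schur's-lemma argument already used for the Weyl representation, now applied to the irreducible action of the Pauli matrices on $\CC^2$.
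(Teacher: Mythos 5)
Your proof is correct and follows essentially the same route as the paper's: both form the unitary operator $-T\sigma_2K$ (your $W = TT_0^{-1}$ is exactly this operator, since $(\sigma_2K)^{-1}=-\sigma_2K$), show it commutes with the whole irreducible representation, and apply Schur's lemma to get $T=c\sigma_2K$ and $T^2=-I$. Your explicit extraction of $T\sigma_jT^{-1}=-\sigma_j$ from the truncated exponential $e^{i\sigma_j}=\cos(1)I+i\sin(1)\sigma_j$ is a small but welcome addition, since the paper's proof leaves that step implicit even though the stated hypothesis only supplies the single group element $e^{i\sigma_j}$ rather than the full one-parameter group.
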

\begin{proof}
For any such antiunitary $T$, the operator $-T\sigma_2K$ is unitary (since it is the composition of two antiunitary operators) and commutes with all the generators $\sigma_1$, $\sigma_2$, $\sigma_3$. Thus it commutes with everything in the irreducible representation, so by Schur's lemma $-T\sigma_2K=cI$. This $c$ is a unit $c^*c=1$ because $-T\sigma_2K$ is unitary.  So, multiplying on the right by $\sigma_2K$ and recalling that $(\sigma_2K)^2=-I$ we get, $T = c\sigma_2K$, and hence $T^2 = (c\sigma_2 K)^2 = c^*c(\sigma_2K)^2 = -I$.
\end{proof}  

\subsection{Other observables} The examples above suggest a general strategy for determining how time reversal transforms an arbitrary observable in quantum theory. The strategy begins by considering the group of symmetries generated by an observable. We then ask how such symmetries change what it means to `reverse time'. The resulting commutation rule determines how time reversal transforms the original observable.

For example, in a gauge-invariant quantum system, it makes sense to presume that gauge transformations do not change the meaning of time reversal. In more precise terms this is to presume that the unitary gauge transformation $U = e^{i\Phi}$ commutes with the time reversal operator $T$. This implies the self-adjoint $\Phi$ that generates the gauge must reverse sign under time reversal, $T\Phi T^*=-\Phi$. The same observation holds whether we begin with the $U(1)$ gauge group of electromagnetism, or the $SU(3)$ gauge group of quantum chromodynamics. As soon as we have a grip on the way that time reversal transforms under a unitary symmetry group, we may immediately infer how it transforms the self-adjoint generators of that group.

\section{Conclusion}

Apart from dissolving some common mythology, I have advocated a perspective on time reversal according to which which its meaning is built up in three stages of commitment. The first stage commits to the direction of time being irrelevant to whether two states are mutually exclusive. This implies that time reversal is unitary or antiunitary. The second stage commits to at least one non-trivial, physically plausible system that is time reversal invariant. This guarantees that time reversal is antiunitary. The third stage commits to the meaning of time reversal being independent of certain symmetry transformations, such as translations or rotations in space. This gives rise to unique transformation rules for particular observables like position, momentum and spin.

Some may wish to get off the boat at any of these three stages. As with many things, the more one is willing to commit, the more one gets. However, I do not think even the strongest of these assumptions can be easily dismissed. The critics of the standard definition of time reversal have at best argued that time reversal should not transform instantaneous states. On the contrary, the perspective developed here shows a precise sense in which the non-standard perspective is implausible. Instantaneous properties of a physical system are sometimes temporally directed, and when this is the case, time reversal may transform them. As we have now seen, a few plausible assumptions about time in quantum theory give rise to just such a transformation, and indeed one that is in many circumstances unique. 


\end{document}